\documentclass[letterpaper, 10 pt, conference]{ieeeconf}
\IEEEoverridecommandlockouts  \overrideIEEEmargins 

\usepackage{enumitem}

\usepackage{mathtools,amsthm}

\usepackage{graphics} 
\usepackage{epsfig}
\usepackage{times} 
\usepackage{amsmath,amssymb,amsfonts}
\usepackage{algorithmic}
\usepackage{graphicx}
\usepackage{textcomp}
\usepackage{varioref}
\usepackage{import}
\usepackage{framed,xcolor}
\usepackage{color}
\usepackage{comment}
\usepackage{epstopdf}   
\usepackage{psfrag}
\usepackage{breqn}
\usepackage{float}
\usepackage{mathtools}
\usepackage{booktabs}
\usepackage{soul}
\usepackage{amsbsy}
\usepackage[bottom]{footmisc}
\usepackage{lineno}
\usepackage{cleveref}
\usepackage{microtype}
\usepackage{comment}
\newtheorem{theorem}{Theorem}

\newtheorem{Lemma}{Lemma}
\newtheorem{Definition}{Definition}

\newtheorem{Assumption}{Assumption}
\definecolor{arash}{rgb}{0.8,0.8,1}
\definecolor{seb}{rgb}{0.8,1,0.8}
\definecolor{seb2}{rgb}{0.5,.5,1}
\definecolor{arash2}{rgb}{0,.5,0}

\graphicspath{{Figures/}}
\newcommand{\vect}[1]{\ensuremath{\boldsymbol{\mathrm{#1}}}}

\title{\LARGE \bf Bias Correction in Deterministic Policy Gradient Using Robust MPC}
\author{Arash Bahari Kordabad, Hossein Nejatbakhsh Esfahani, Sebastien Gros
\thanks{The authors are with Department of Engineering Cybernetics, Norwegian University of Science and Technology (NTNU), Trondheim, Norway. E-mail:{\tt\small \{Arash.b.kordabad, hossein.n.esfahani, sebastien.gros\}@ntnu.no}}}
\begin{document}
\bstctlcite{IEEEexample:BSTcontrol}
\maketitle
\thispagestyle{empty}
\pagestyle{empty}
\begin{abstract}
In this paper, we discuss the deterministic policy gradient using the Actor-Critic methods based on the linear compatible advantage function approximator, where the input spaces are continuous. When the policy is restricted by hard constraints, the exploration may not be Centred or Isotropic (non-CI). As a result, the policy gradient estimation can be biased. We focus on constrained policies based on Model Predictive Control (MPC) schemes and to address the bias issue, we propose an approximate Robust MPC approach accounting for the exploration. The RMPC-based policy ensures that a Centered and Isotropic (CI) exploration is approximately feasible. A posterior projection is used to ensure its exact feasibility, we formally prove that this approach does not bias the gradient estimation.
\end{abstract}
\section{INTRODUCTION}
Reinforcement learning (RL) provides powerful tools for tackling Markov Decision Processes (MPDs) without depending on the probability distribution underlying the state transition~\cite{bertsekas2019reinforcement,sutton2018reinforcement}. RL methods attempt to enhance the closed-loop performance of a control policy deployed on the MDP, using observed realisation of the state transitions and of the corresponding stage cost. RL methods are usually either direct, based on an approximation of the optimal policy (e.g., deterministic and stochastic policy gradient methods~\cite{silver2014deterministic}) or indirect, based on an approximation of the action-value function (e.g., Q-learning). Unstructured function approximation techniques (e.g., Deep Neural Networks) are often used to carry these approximations. Unfortunately, the closed-loop behavior of such approximators can be challenging to analyze formally. In contrast, structured function approximations such as Model Predictive Control (MPC) schemes provide a formal framework to analyse the stability and feasibility of the closed-loop system~\cite{wabersich2018safe}. Recent research have focused on MPC-based policy approximation for RL~\cite{koller2018learning,gros2019data,zanon2020reinforcement,bahari2021reinforcement,gros2020reinforcement,nejatbakhsh2021reinforcement}.
\par For computational reasons, simple models are usually preferred in the MPC scheme. Hence, the MPC model often does not have the structure required to correctly capture the real system dynamics and stochasticity. As a result, while MPC can deliver a reasonable approximation of the optimal policy, it is usually suboptimal~\cite{MPCbook}. Choosing the MPC model parameters that maximise the closed-loop performance of the MPC scheme is a difficult problem, and the parameters that best fit the MPC model to the real system are not guaranteed to yield the best MPC policy \cite{gros2019data}. In \cite{gros2020reinforcement,gros2019data}, it is shown that adjusting not only the MPC model, but also the cost and constraints can be beneficial to achieve the best closed-loop performances, and RL is proposed as a possible approach to perform that adjustment in practice. In the presence of uncertainties and stochasticity, if constraints satisfaction is critical, Robust Model Predictive Control (RMPC) provides tools to ensure that the constraints are satisfied, and can be used in the RL context \cite{zanon2020safe}.

%The uncertainty may arise in different ways, such as additive disturbance, unknown model parameters, etc.

%\par MPC uses a model of the real system and, in general, delivers a suboptimal but useful approximation of the optimal policy~\cite{MPCbook}. MPC is often adopted for its ability to handle state and input constraints explicitly~\cite{mayne2014model}. 

\par Actor-Critic (AC) techniques combine the strong points of actor-only (policy search methods) and critic-only (e.g., Q-learning) methods~\cite{konda2000actor}. AC approaches are based on genuine optimality conditions of the closed-loop policy and typically deliver less noisy policy gradients than direct policy search. The deterministic policy gradient is built based on an approximation of the advantage function associated with the policy. To this end, a linear compatible advantage function approximator is a convenient choice, because it provides a correct policy gradient estimation with a given structure and a low number of parameters~\cite{silver2014deterministic}. For deterministic policies, exploration is required in order to estimate the corresponding policy gradient. % and discover directions in the policy parameters that can improve the closed-loop performance. 
In the presence of hard constraints, this exploration can be restricted. As a result, the exploration may become non-CI. In~\cite{Bias2020} it is shown that a linear compatible advantage function approximator can deliver an incorrect policy gradient estimation for a non-CI exploration.

\par In this paper, we propose to use a RMPC scheme that is robust with respect to a bounded disturbance of its first control input to enable the feasibility of a CI exploration. Because RMPC is computationally expensive, we use an inexpensive approximate RMPC instead, feasible to a first-order approximation. To ensure the feasibility of the exploration, a posterior projection technique is used. As a main result of this paper, we formally prove that the exploration resulting from RMPC scheme delivers an unbiased policy gradient estimation.

\par The paper is structured as follows. Section \ref{sec:back} provides background material on RL and details the bias problem. Section \ref{sec:RMPC} presents the RMPC-based approach that tackles the problem. For the sake of simplicity, we will consider a formulation robust with respect to the exploration only, while in practice the formulation can also be robust against model uncertainties and the stochasticity of the real system, as in \cite{zanon2020safe}. Section \ref{sec:FeasibilityForce} presents the projection approach required for nonlinear problems. Section \ref{sec:CORRPOLGRAD} describes the main theorem in the gradient bias correction using RMPC-based policy and proves that the resulting approach asymptotically yields a correct policy gradient. Section \ref{sec:sim} provides numerical examples of the method. Section \ref{sec:conc} delivers a conclusion.

\section{Background}\label{sec:back}
For a given MDP with continuous state-input space, a deterministic policy parametrized by $\vect \theta$ delivers an input $\vect a\in\mathbb{R}^m $ as a function of state $\vect s\in\mathbb{R}^n$ as,
%\begin{align}
$\vect \pi_{\vect \theta} (\vect s) : \mathbb{R}^n \rightarrow \mathbb{R}^m$. 
%\end{align}
If delivered by an MPC scheme, this policy is obtained as:
\begin{align}
\label{eq:Policy0}
\vect{\pi}_{\vect\theta}\left(\vect s\right) = \vect u_0^\star\left(\vect{s},\vect{\theta}\right),
\end{align}
where $\vect u_0^\star$ is the first element of the solution $\vect u^\star$ given by:
\begin{subequations}\label{eq:MPC}
\begin{align}
\min_{\vect{u},\vect x}&\quad V_{\vect\theta}(\vect x_{N}) +  \sum_{k=0}^{N-1} \gamma^k \ell_{\vect\theta}(\vect x_{k},\vect u_{k}), \label{eq:MPC:cost}\\
\mathrm{s.t.}&\quad \vect x_{k+1} = \vect f_{\vect\theta}\left(\vect x_{k},\vect u_{k}\right),\,\,\, \vect x_{0} = \vect s, \label{eq:Dynamics:MPC:Robust}\\
&\quad  \vect{h}_{\vect\theta}\left(\vect{x}_{k},\vect{u}_{k}\right) \leq 0, \label{eq:SafetyConstraints:MPC}\quad  \vect{h}_{\vect\theta}^{\mathrm f}\left(\vect{x}_{N}\right) \leq 0,
\end{align}
\end{subequations}
where $V_{\vect\theta}$ and $\ell_{\vect\theta}$ are the MPC terminal and stage costs, respectively. Function $\vect f_{\vect\theta}$ is the model dynamics and $\vect{h}_{\vect\theta}$ and $\vect{h}_{\vect\theta}^{\mathrm f}$ are the stage and terminal inequality constraints, respectively. Vector $\vect x=\{\vect x_{0,\ldots,N}\}$ is the predicted state trajectory and $\vect u=\{\vect u_{0,\ldots,N-1}\}$ is the input profile. State $\vect s$ is the current state of the system,  $N$ is the horizon length and $\gamma \in [0,1]$ is the discount factor. For the following theoretical developments, it will be useful to consider a single-shooting formulation of MPC \eqref{eq:MPC} resulting in a parametric Nonlinear Program (NLP):
\begin{subequations}\label{eq:Generic:NLP}
\begin{align}
\min_{\vect{u}}&\quad \Phi_{\vect{\theta}}(\vect s,\vect{u}), \\
\mathrm{s.t.}&\quad \vect{H}_{\vect{\theta}}\left(\vect s,\vect{u}\right) \leq 0, \label{eq:SafetyConstraints}
\end{align}
\end{subequations}
delivering the input profile of \eqref{eq:MPC} for all $\vect\theta,\vect s$ for some cost $\Phi_{\vect{\theta}}$ and inequality constraints $\vect{H}_{\vect{\theta}}$. We seek the policy parameters $\vect\theta$ that minimize the overall closed-loop cost $J$ of the policy ${\vect{ \pi}}_{\vect{\theta}}$ defined as follows:
\begin{align}
J(\vect{ \pi}_{\vect\theta}) = \mathbb{E}_{{\vect { \pi}_{\vect\theta}}}\left[\left. \sum_{k=0}^\infty\, \gamma^k L(\vect s_k,\vect a_k)\,\right|\, \vect a_k = \vect { \pi}_{\vect\theta}\left(\vect s_k\right)\, \right],
\end{align}
where $L(\vect s,\vect a) \in\mathbb{R}$ is the baseline stage cost evaluating the policy performance. It is shown in \cite{gros2019data} that using an MPC stage cost $\ell_{\vect\theta}$ different from the baseline stage cost $L$ can be beneficial when the MPC model is not exact. The expectation $\mathbb{E}_{{\vect { \pi}_{\vect\theta}}}$ is taken over the distribution of the Markov chain in closed-loop with the policy ${\vect{ \pi}}_{\vect{\theta}}$. The policy gradient for the deterministic policy ${\vect { \pi}}_{\vect{\theta}}$ is obtained as follows~\cite{silver2014deterministic}:
\begin{align}\label{eq:dj}
\nabla_{\vect{\theta}}\, J({\vect{ \pi}}_{\vect{\theta}})
&= \mathbb{E}_{\vect s}\left[\nabla_{\vect {\theta}} {\vect{ \pi}}_{\vect{\theta}}(\vect s)\, \nabla_{\vect a}A_{{\vect{ \pi}}_{\vect{\theta}}}(\vect s,\vect\pi_{\vect\theta}(\vect s))\right], %\nonumber    
\end{align}
where $A_{{\vect{ \pi}}_{\vect{\theta}}}(\vect s,\vect a)=Q_{\vect{\pi}_{\vect{\theta}}}(\vect s,\vect a) - V_{\vect{\pi}_{\vect{\theta}}}(\vect s)$ is the advantage function associated to  ${\vect{ \pi}}_{\vect{\theta}}$, and where $Q_{\vect{\pi}_{\vect{\theta}}}$ and $V_{\vect{\pi}_{\vect{\theta}}}$ are the action-value and value functions for the policy ${\vect{ \pi}}_{\vect{\theta}}$, respectively. In a non-episodic context, the expectation $\mathbb{E}_{\vect s}$ is taken over the steady-state distribution of the Markov chain. In an RL context, the advantage function $ A_{\vect{ \pi}_{\vect{\theta}}}$ must be approximated and evaluated from data. In the following, we label the advantage function approximation as $A^{\vect{w}}_{\vect{ \pi}_{\vect{\theta}}}$ with parameter vector $\vect w$. The corresponding estimation of the policy gradient in \eqref{eq:dj} reads as:
\begin{align}
\label{eq:dj:Approx}
\widehat{\nabla_{\vect{\theta}}\, J(\vect{ \pi}_{\vect{\theta}})} &= \mathbb{E}_{{\vect{ \pi}_{\vect{\theta}}}}\left[\nabla_{\vect{\theta}} \vect{ \pi}_{\vect{\theta}}(\vect s)\, \nabla_{\vect{a}}A^{\vect{w}}_{\vect{ \pi}_{\vect{\theta}}}(\vect s,\vect\pi_{\vect\theta}(\vect s))\right].
\end{align}
The following theorem provides the condition allowing one to replace the exact advantage $ A_{\vect{ \pi}_{\vect{\theta}}}$ in \eqref{eq:dj} by an approximation $A^{\vect{w}}_{\vect{ \pi}_{\vect{\theta}}}$, without affecting the policy gradient.
%\seb{If we need space we can probably dismiss that theorem and cut a few things to make the same point with less explanations.}\arash{OK. I will consider it as an option. However, I think without theory we need to explain other things and can't get too much space.}
\begin{theorem} \cite{silver2014deterministic}
If $A^{\vect{w}}_{\vect{ \pi}_{\vect{\theta}}}$ satisfies
\begin{enumerate}[label=\roman*.]
    \item $\nabla_{\vect a}A^{\vect{w}}_{\vect{ \pi}_{\vect{\theta}}}=\nabla_{\vect\theta}{\vect{\pi}_{\vect\theta}^\top \vect w}$,
    \item $\vect w$ minimizes the following mean-squared error:
\begin{align}\label{eq:A:TDEstimation2}
\vect{w} =\mathrm{arg}\min_{\vect{w}}\, \frac{1}{2}\mathbb{E}_{\vect s}\left[\left\|\nabla_{\vect a} A_{\vect{\pi}_{\vect{\theta}}} - \nabla_{\vect a}A^{\vect w}_{\vect{\pi}_{\vect{\theta}}}   \right\|^2\right],
\end{align}
\end{enumerate}
where the gradients are evaluated at $\vect a=\vect {\pi}_{\vect\theta}$, then we have:
\begin{align}
\label{eq:Correct:dJ}
\widehat{\nabla_{\vect{\theta}}\, J(\vect{ \pi}_{\vect{\theta}})}={\nabla_{\vect{\theta}}\, J(\vect{ \pi}_{\vect{\theta}})}.
\end{align}
\end{theorem}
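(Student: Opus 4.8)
The plan is to exploit the first-order optimality condition of the least-squares problem in condition (ii) and to recognise, via condition (i), that the regressor appearing there is exactly the policy Jacobian $\nabla_{\vect\theta}\vect\pi_{\vect\theta}$ that multiplies the advantage gradient in both \eqref{eq:dj} and \eqref{eq:dj:Approx}. First I would observe that condition (i) makes $\nabla_{\vect a}A^{\vect w}_{\vect\pi_{\vect\theta}}$ linear in $\vect w$, namely $\nabla_{\vect a}A^{\vect w}_{\vect\pi_{\vect\theta}} = (\nabla_{\vect\theta}\vect\pi_{\vect\theta})^\top \vect w$, so that its derivative $\nabla_{\vect w}\big(\nabla_{\vect a}A^{\vect w}_{\vect\pi_{\vect\theta}}\big) = (\nabla_{\vect\theta}\vect\pi_{\vect\theta})^\top$ is constant in $\vect w$.

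Next I would impose stationarity of the objective in \eqref{eq:A:TDEstimation2}. Since $\vect w$ is a minimiser, the gradient of the mean-squared error with respect to $\vect w$ vanishes; applying the chain rule and substituting the regressor from the previous step yields
\begin{align}\label{eq:normal-eq}
\mathbb{E}_{\vect s}\left[\nabla_{\vect\theta}\vect\pi_{\vect\theta}\left(\nabla_{\vect a}A_{\vect\pi_{\vect\theta}} - \nabla_{\vect a}A^{\vect w}_{\vect\pi_{\vect\theta}}\right)\right] = 0,
\end{align}
where all gradients are evaluated at $\vect a = \vect\pi_{\vect\theta}(\vect s)$. This is the normal equation of a least-squares projection: the residual of the advantage-gradient fit is annihilated by the operator spanned by the policy Jacobian.

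Finally I would split \eqref{eq:normal-eq} and read off the two pieces. The term $\mathbb{E}_{\vect s}\big[\nabla_{\vect\theta}\vect\pi_{\vect\theta}\,\nabla_{\vect a}A_{\vect\pi_{\vect\theta}}\big]$ is precisely $\nabla_{\vect\theta}J(\vect\pi_{\vect\theta})$ by \eqref{eq:dj}, while $\mathbb{E}_{\vect s}\big[\nabla_{\vect\theta}\vect\pi_{\vect\theta}\,\nabla_{\vect a}A^{\vect w}_{\vect\pi_{\vect\theta}}\big]$ is exactly $\widehat{\nabla_{\vect\theta}J(\vect\pi_{\vect\theta})}$ by \eqref{eq:dj:Approx}, the two expectations coinciding because both are taken over the same closed-loop (steady-state) state distribution. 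Equating them gives the claim \eqref{eq:Correct:dJ}.

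I expect the only genuine obstacle to be notational bookkeeping: keeping the matrix dimensions and transpose conventions consistent when differentiating the vector-valued, $\vect w$-linear map $\vect w \mapsto (\nabla_{\vect\theta}\vect\pi_{\vect\theta})^\top\vect w$ inside the squared norm, and confirming that the expectation operators in \eqref{eq:dj} and \eqref{eq:dj:Approx} refer to the identical distribution so that the stationarity identity lines up term-by-term. The conceptual content is simply that the compatibility condition (i) forces the least-squares feature to equal the policy Jacobian, so the projection residual is killed by exactly the operator that maps advantage gradients to policy gradients.
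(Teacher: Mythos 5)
Your proposal is correct: the paper itself gives no proof and simply defers to \cite{silver2014deterministic}, and your argument — using the compatibility condition to identify the regressor with the policy Jacobian, invoking the normal equations of the least-squares fit, and splitting the resulting identity into the exact and approximate policy gradients — is precisely the standard proof from that reference. No gaps; the only care needed is, as you note, the bookkeeping that both expectations in \eqref{eq:dj} and \eqref{eq:dj:Approx} are over the same (discounted/steady-state) state distribution induced by $\vect\pi_{\vect\theta}$.
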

\begin{proof}
See \cite{silver2014deterministic}.
\end{proof}
An advantage function approximator that achieves \eqref{eq:Correct:dJ} is labelled \textit{compatible}. A linear compatible advantage function approximator $A^{\vect{w}}_{\vect{ \pi}_{\vect{\theta}}}$, parametrized by $\vect w$ can read as \cite{silver2014deterministic}:
\begin{align}
\label{eq:CompatibleA}
A^{\vect{w}}_{\vect{ \pi}_{\vect{\theta}}}\left(\vect s,\vect{a}\right) =  \vect w^\top \nabla_{\vect{\theta}}\vect{ \pi}_{\vect{\theta}} \left(\vect{a} - \vect{ \pi}_{\vect{\theta}}\right).
\end{align}
It is well known that estimating $\nabla_{\vect{a}}A_{\vect{ \pi}_{\vect{\theta}}}$ directly is very difficult~\cite{silver2014deterministic}. As a surrogate to \eqref{eq:A:TDEstimation2}, the least-squares problem:
\begin{align}\label{eq:A:TDEstimation}
\vect{w} =\mathrm{arg}\min_{\vect{w}}\, \frac{1}{2}\mathbb{E}_{{\vect { \pi}_{\vect\theta}}}\left[\left(Q_{\vect{\pi}_{\vect{\theta}}} - \hat V_{\vect{\pi}_{\vect{\theta}}} - A^{\vect{w}}_{\vect{\pi}_{\vect{\theta}}}   \right)^2\right],
\end{align}
is used, where the value function estimation $\hat V_{\vect{\pi}_{\vect{\theta}}}\approx V_{\vect{\pi}_{\vect{\theta}}}$ is a baseline supporting the evaluation of $\vect{w}$. 
In order to obtain $\vect w$ from \eqref{eq:A:TDEstimation}, the input $\vect a$ applied to the real system must be different from the actual policy $\vect \pi_{\vect \theta}$, i.e. the input $\vect a$ applied to the real system should include some exploration in order to depart from the given policy $\vect \pi_{\vect \theta}$. One common choice of exploration is to add a random disturbance $\vect e$ to the policy as follows:
\begin{align}
\vect a=\vect \pi_{\vect \theta} (\vect s) +\vect e.
\end{align}
For the sake of clarity, we define hereafter a CI exploration.
\begin{Definition}
An exploration $\vect e$ is Centred and Isotropic (CI) if $\mathbb{E}_{\vect {e}}[\vect {e}]=0 $, and there exists a scalar $p$ such that, $
 \mathbb{E}_{\vect {e}}[\vect {e}\vect {e}^\top]=pI.
$
Otherwise it is non-CI.
\end{Definition}
\par Since the policy $\vect \pi_{\vect \theta} $ is subject to the hard constraints \eqref{eq:SafetyConstraints}, an arbitrary input $\vect a$ resulting from a random exploration $\vect e$ may not be feasible. Hence the exploration ought to be restricted such that it respects the constraints. A possible solution for this problem is, e.g., to use a projection of $\vect a$ on the feasible set of NLP \eqref{eq:Generic:NLP}. In the following we provide a definition for the projection operator.
\begin{Definition}
For an arbitrary input $\vect a$, the projection operator $P(\vect s,\vect a)$ is defined as follows:
\begin{subequations} \label{eq:projection}
\begin{align} 
P(\vect s,\vect a) &= \vect u^\perp_0,\\
\vect u^\perp &=\mathrm {arg}\min_{\vect{u}} \quad \frac{1}{2} \|\vect u_0-\vect a\|^2, \label{eq:Pcost}\\
&\qquad\quad\mathrm{s.t.}\quad \vect{H}_{\vect{\theta}}\left(\vect s,\vect{u}\right) \leq 0, \label{eq:Pconst}
\end{align}
\end{subequations}
where $\vect u^\perp_0$ is the first element of the input profile $\vect u^\perp_{0,\ldots,N-1}$ solution of \eqref{eq:Pcost}-\eqref{eq:Pconst}.
\end{Definition}
In particular, at a given state $\vect s$, the input $\vect a_\perp$ resulting from projecting the exploration is given by:
\begin{align}
\label{eq:Proj:a}
 \vect a_\perp = P\left(\vect s,\vect \pi_{\vect\theta}\left(\vect s\right)+\vect e\right).   
\end{align}
%More specifically, consider input $\vect a_\perp$ defined as:
%\begin{align}
%\vect{a_\perp}:=\vect u^\perp_0
%\end{align}
%where the input profile $\vect u^\perp_{0,\ldots,N-1}$ is solution of:%One can observe that $\vect{a_\perp}$ (together with the sequence $\vect u^\perp_{1,\ldots,N-1}$) is by construction a feasible input for the NLP \eqref{eq:Generic:NLP}. We will label the corresponding exploration as:
%\begin{align}
%\vect {e_\perp}=\vect {a_\perp}- \vect \pi_{\vect \theta} (\vect s).
%\end{align}
Then the projected exploration $\vect e_\perp$ is given by:
\begin{align}\label{eq:expl:prof}
\vect e_\perp = \vect a_\perp-\vect \pi_{\vect\theta}.
\end{align}
Unfortunately, even if the selected exploration $\vect e$ is CI, the projected exploration $\vect e_\perp$ may not be~\cite{Bias2020}. It is shown in \cite{Bias2020} that the linear compatible function approximator \eqref{eq:CompatibleA} using the fitting problem \eqref{eq:A:TDEstimation} delivers a correct estimated policy gradient \eqref{eq:dj:Approx}  only for a CI exploration. 

In this paper, we modify \eqref{eq:Generic:NLP} to find a policy $\vect{\hat \pi}_\theta\left(\vect s\right)$ for which a CI exploration is feasible. This policy $\vect{\hat \pi}_\theta\left(\vect s\right)$ is based on creating a small distance from the boundaries of the constraints so that a small CI exploration is feasible. To perform this modification, in the next section we will introduce an approximate RMPC scheme having a computational complexity similar to a standard MPC scheme. This RMPC scheme delivers a policy that can be disturbed with an additive perturbation in a given ball while keeping feasibility to a first-order approximation.

%We ought to stress again here that since the real system is possibly stochastic and the MPC model inaccurate, the policy resulting from the MPC scheme typically does not ensure that the constraints \eqref{eq:SafetyConstraints:MPC} are respected by the real system trajectories, even if no exploration is used. For the sake of simplicity, we present a RMPC scheme that accounts for the exploration only, while in practice, a guaranteed satisfaction of the constraints \eqref{eq:SafetyConstraints:MPC} requires that the RMPC also includes a (possibly conservative) representation of the model uncertainty and stochasticity of the real system as in e.g.~\cite{zanon2020safe}. The theory presented hereafter remains valid for that extension.

%The RMPC scheme will not necessarily ensure that the real system respects the constraints unless it is formulated so as to treat the model error and disturbances on top of the exploration. For the sake of simplicity, the RMPC we investigate here focuses on allowing a feasible CI exploration. To ensure the constraints satisfaction of the real system subject to the RMPC-based policy in the presence of stochasticity and model uncertainties, a further robustification of the RMPC scheme should be included as proposed in~\cite{zanon2020safe} in the context of Safe RL. 

\section{RMPC-based deterministic policy}\label{sec:RMPC}
In this section, we propose a modified policy $\hat{\vect{\pi}}_\theta$ based on an RMPC-scheme such that any input ${\vect {\hat a}}$ resulting from:
\begin{align}
\label{eq:explo}
\vect {\hat{a}} = \hat{\vect{\pi}}_\theta\left(\vect s\right) + \vect {\hat e},\quad \forall \vect {\hat e} \in B(0,\eta),
\end{align}
is feasible for the MPC \eqref{eq:MPC}, where $B(0,\eta)$ is a ball of radius $\eta$. For the sake of brevity, we consider in the following that the exploration $\vect {\hat e}$ is uniformly distributed in the ball $B(0,\eta)$. In that specific case, the exploration $\vect {\hat e}$ is CI with $p=\frac{1}{3} \eta^2$. To generate $\hat{\vect{\pi}}_\theta$ we tighten the inequality constraint \eqref{eq:SafetyConstraints} of NLP \eqref{eq:Generic:NLP} as follows:
\begin{subequations}
\label{eq:Modified:NLP}
\begin{align}
\min_{\vect{u}}&\quad \Phi_{\vect{\theta}}(\vect{s},\vect{u}), \\
\mathrm{s.t.}&\quad \vect{H}_{\vect{\theta}}\left(\vect{s},\vect{u}\right) + \vect\Delta_{\vect{\theta}}\left(\vect{s},\vect{u}\right)  \leq 0, \label{eq:SafetyConstraints2}
\end{align}
\end{subequations}
where $\vect\Delta_{\vect{\theta}}\left(\vect{s},\vect{u}\right) \geq 0$ is a back-off term added to ensure that the NLP \eqref{eq:Generic:NLP} is feasible for any additive perturbation $\vect {\hat e}\in B\left(0,\eta\right)$ of the input $\vect u_0$ obtained from \eqref{eq:Modified:NLP}. In general, evaluating $\vect\Delta_{\vect{\theta}}\left(\vect{s},\vect{u}\right)$ is difficult. To address this issue, we propose to compute $\vect\Delta_{\vect{\theta}}\left(\vect{s},\vect{u}\right)$ using a first-order approximation of the constraint \eqref{eq:SafetyConstraints}. More specifically, we will impose the approximated constraint:
\begin{align}
\label{eq:HVariation}
\vect{H}_{\vect{\theta}}\left(\vect s,\vect{\hat u}\right) &\approx  \vect{H}_{\vect{\theta}}\left(\vect{s},\vect{u}\right) + \left.\frac{\partial \vect{H}_{\vect{\theta}}}{\partial \vect u_0}\right|_{\vect u}%_0=\hat{\vect{\pi}}_{\vect\theta}}
\vect {\hat e}\, \leq 0,
\end{align}
where $\vect{\hat u}$ is the input profile resulting from perturbing $\vect{u}$ with the exploration $\vect {\hat e}\in B\left(0,\eta\right)$ in the first input $\vect u_0$. The following Lemma provides an explicit form for \eqref{eq:HVariation}.
\begin{Lemma} \label{Lem:Trivial}
Inequality \eqref{eq:HVariation} holds tightly for all $\vect {\hat e}\in B\left(0,\eta\right)$ if 
\begin{align}\label{eq:H:bound}
\vect{H}^i_{\vect{\theta}}\left(\vect{s},\vect{u}\right) + \left \| \frac{\partial \vect{H}^i_{\vect{\theta}}}{\partial \vect u_0}\bigg|_{\vect u} \right\| \eta\leq 0 \end{align}
holds, where $\vect{H}^i_{\vect{\theta}}$ is the $i^{\mathrm{th}}$ element of the vector $\vect{H}_{\vect{\theta}}$.
\end{Lemma}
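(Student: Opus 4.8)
The plan is to treat each scalar constraint $\vect{H}^i_{\vect\theta}$ separately and to observe that, once the dependence on $\vect{\hat e}$ has been linearized as in \eqref{eq:HVariation}, the requirement ``$\vect{H}^i_{\vect\theta}(\vect s,\vect{\hat u})\leq 0$ for every $\vect{\hat e}\in B(0,\eta)$'' is a worst-case (robust) condition of an affine function over a Euclidean ball. The entire argument then reduces to evaluating the support function of that ball in the direction of the constraint gradient, i.e. to a single application of the Cauchy--Schwarz inequality.

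First I would fix an index $i$ and write the $i^{\mathrm{th}}$ row of \eqref{eq:HVariation} as
\begin{align}
\vect{H}^i_{\vect\theta}(\vect s,\vect u) + \vect g_i^\top \vect{\hat e}\leq 0,\qquad \vect g_i := \left(\left.\frac{\partial \vect{H}^i_{\vect\theta}}{\partial \vect u_0}\right|_{\vect u}\right)^{\!\top}.
\end{align}
Imposing this for all admissible perturbations is equivalent to imposing it for the worst one, i.e. to
\begin{align}
\vect{H}^i_{\vect\theta}(\vect s,\vect u) + \max_{\|\vect{\hat e}\|\leq \eta}\,\vect g_i^\top \vect{\hat e}\leq 0.
\end{align}
Next I would evaluate this maximum: by Cauchy--Schwarz, $\vect g_i^\top \vect{\hat e}\leq \|\vect g_i\|\,\|\vect{\hat e}\|\leq \|\vect g_i\|\,\eta$ for all $\vect{\hat e}\in B(0,\eta)$, and this upper bound is attained at $\vect{\hat e}^\star=\eta\,\vect g_i/\|\vect g_i\|\in B(0,\eta)$ whenever $\vect g_i\neq 0$ (and trivially with value $0$ when $\vect g_i=0$). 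Hence $\max_{\|\vect{\hat e}\|\leq\eta}\vect g_i^\top\vect{\hat e}=\|\vect g_i\|\,\eta$, and substituting back recovers exactly \eqref{eq:H:bound}. Because the maximizer $\vect{\hat e}^\star$ lies in the ball, the robust constraint becomes active along that direction, which is the sense in which the bound is \emph{tight}: \eqref{eq:H:bound} is not merely sufficient but identifies the least conservative back-off $\vect\Delta^i_{\vect\theta}=\|\vect g_i\|\,\eta$ that guarantees feasibility for the whole ball.

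I do not anticipate a genuine obstacle here, since the statement is a direct consequence of Cauchy--Schwarz together with the attainability of its bound. The only points requiring minor care are the row/column convention for the Jacobian $\partial\vect{H}_{\vect\theta}/\partial\vect u_0$ (so that $\vect g_i^\top\vect{\hat e}$ is the correct scalar), the degenerate case $\vect g_i=0$, and the fact that $\|\cdot\|$ must be read as the Euclidean norm inducing $B(0,\eta)$, so that the dual norm appearing in the support function coincides with $\|\cdot\|$ itself and the worst-case direction is well defined.
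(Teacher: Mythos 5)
Your argument is correct and is essentially the paper's own proof: the paper also reduces the claim to the single inequality $\frac{\partial \vect{H}^i_{\vect{\theta}}}{\partial \vect u_0}\big|_{\vect u}\, \vect{\hat e} \leq \big\| \frac{\partial \vect{H}^i_{\vect{\theta}}}{\partial \vect u_0}\big|_{\vect u} \big\|\, \eta$ via Cauchy--Schwarz and notes that it is tight. Your version merely spells out the attaining direction and the degenerate zero-gradient case, which the paper leaves implicit.
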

\begin{proof}
The following inequality
\begin{align}
 \frac{\partial \vect{H}^i_{\vect{\theta}}}{\partial \vect u_0}\bigg|_{\vect u} \vect {\hat e} \leq 
%\end{align}
%has the solution:
%\begin{align}\label{eq:H:bound2}
\left \| \frac{\partial \vect{H}^i_{\vect{\theta}}}{\partial \vect u_0}\bigg|_{\vect u} \right\| \eta \,,\quad \forall \vect{\hat{e}}\in B(0,\eta), \end{align}
holds and is tight, where $\|.\|$ indicates an Euclidean norm. 
\end{proof}
The principles detailed above readily apply to MPC scheme \eqref{eq:MPC}. More specifically, an input disturbance $\vect {\hat e}$ yields:
%Then function $\vect\Delta_{\vect{\theta}}\left(\vect{u},\vect{s}\right) $ in Eq. \eqref{eq:SafetyConstraints2} can be defined as:
%\begin{align}
%\vect\Delta_{\vect{\theta}}\left(\vect{u},\vect{s}\right)_i =  \left \| \left(\frac{\partial %\vect{H}_{\vect{\theta}}}{\partial \vect x}\frac{\partial \vect x}{\partial \vect u_0}  + \frac{\partial %\vect{H}_{\vect{\theta}}}{\partial \vect u_0}\right)_i \right\|\eta
%\end{align}
%\seb{You should have introduced that before the NLP (3)}\arash{you mean before \eqref{eq:Modified:NLP}? but I think it is difficult to explain it there}\sebrep{I was talking about the MPC scheme (25a) and (25b) plus the unmodified constraints. It would make sense to explain early in the text where the NLP comes from. Then it would also make (25) easier to explain as a ``modified" scheme.} \arashrep{Sorry, I didn't get your point}\sebrep{I think you are making the explanations more complex they need to be by talking about a generic NLP for so long and introducing an actual MPC scheme that late. It could be easier to introduce the regular MPC scheme early (even before the NLP) so that the reader understands where the NLP is coming from.}
\begin{align}\label{eq:h:approx}
\vect{h}_{\vect{\theta}} &\approx  \vect{h}_{\vect{\theta}}\left(\vect{x}_k,\vect{u}_k\right) + \left(\frac{\partial \vect{h}_{\vect{\theta}}}{\partial \vect x_k}\frac{\partial \vect x_k}{\partial \vect u_0}  + \frac{\partial \vect{h}_{\vect{\theta}}}{\partial \vect u_0}\right)\vect {\hat e}, 
\end{align}
where the left hand side is evaluated of the perturbed trajectory and $\frac{\partial \vect x_k}{\partial \vect u_0}$ is obtained from the following linear dynamics:
\begin{align}
\label{eq:Sens}
\frac{\partial \vect x_k}{\partial \vect u_0} &= \left(\frac{\partial \vect{f}_{\vect{\theta}}}{\partial \vect x_{k-1}} \frac{\partial \vect x_{k-1}}{\partial \vect u_0} +  \frac{\partial \vect{f}_{\vect{\theta}}}{\partial \vect u_{0}}\right) \Big|_{\vect x_{k-1},\vect u_{k-1}},
\end{align}
with the initial condition $ \frac{\partial \vect x_0}{\partial \vect u_0}=0$. 

Imposing an arbitrary exploration radius $\eta$ may be infeasible for some state $\vect s$. To avoid this issue, we consider the radius as a decision variable $\nu \in [0,\bar\eta]$ whose optimal solution is $\eta$. We label $\bar\eta$ the maximum desired radius for the exploration. The RMPC-based policy $\vect{\hat \pi}_{\vect \theta}$ is then obtained as the first element of the input sequence given by:
\begin{subequations}
\label{eq:Modifed:MPC}
\begin{flalign}
\min_{\vect{u},\vect x,\nu}&\quad- w\nu+ V_{\vect\theta}(\vect x_{N}) +  \sum_{k=0}^{N-1}\gamma^k \ell_{\vect\theta}(\vect x_{k},\vect u_{k}),\\
\mathrm{s.t.}&\quad \vect x_{k+1} = \vect f_{\vect\theta}\left(\vect x_{k},\vect u_{k}\right),\,\,\, \vect x_{0} = \vect s, \label{eq:Dynamics:MPC:Robust2}\\
&\quad  \vect{h}_{\vect\theta}\left(\vect{x}_{0},\vect{u}_{0}\right)_i + \left\|\left( \frac{\partial \vect{h}_{\vect{\theta}}}{\partial \vect u_0}\right)_i\right\|\nu \leq 0, \\
&\quad  \vect{h}_{\vect\theta}\left(\vect{x}_{k},\vect{u}_{k}\right)_i + \left\|\left(\frac{\partial \vect{h}_{\vect{\theta}}}{\partial \vect x_k}\frac{\partial \vect x_k}{\partial \vect u_0}  \right)_i\right\|\nu \leq 0,\,\, k>0, \nonumber\\
&\quad  \vect{h}_{\vect\theta}^{\mathrm f}\left(\vect{x}_{N}\right)_i + \left\|\left(\frac{\partial \vect{h}^{\mathrm f}_{\vect{\theta}}}{\partial \vect x_N}\frac{\partial \vect x_N}{\partial \vect u_0}\right)_i \right\|\nu \leq 0 \\
&\quad  \frac{\partial \vect x_{k+1}}{\partial \vect u_0} = \frac{\partial \vect{f}_{\vect{\theta}}}{\partial \vect x_{k}} \frac{\partial \vect x_{k}}{\partial \vect u_0} ,\quad k = 2,\ldots,N-1, \label{eq:dynamics:RMPC}\\
&\quad   \frac{\partial \vect x_1}{\partial \vect u_0} = \frac{\partial \vect{f}_{\vect{\theta}}}{\partial \vect u_{0}},\quad
 0\leq \nu \leq \bar \eta, 
\end{flalign}
\end{subequations}
where $w$ is a positive constant weight, chosen large enough such that $\eta = \bar \eta $ when feasible. Index $i$ indicates the $i^{\mathrm{th}}$ element of the vectors.

One can observe that this RMPC scheme is feasible if the original MPC scheme \eqref{eq:MPC} is feasible. Indeed, the choice $\nu=0$ makes the RMPC and MPC schemes equivalent. It follows that the RMPC scheme \eqref{eq:Modifed:MPC} inherits the recursive feasibility of \eqref{eq:MPC}. We ought to stress again here that the recursive feasibility of \eqref{eq:MPC} may require the robust formulation to be extended to take the stochastic disturbances and model errors into account, as e.g. in~\cite{zanon2020safe}. We have omitted this aspect here for the sake of brevity and simplicity. The theory presented hereafter is applicable to that extension. Additionally, one ought to note that RMPC \eqref{eq:Modifed:MPC} is accounting for a disturbance on the initial input only. However, exploration is meant to be applied on all times. This could be reflected in the RMPC by accounting for a disturbance of the entire input profile, with minor modifications of the formulation. These modifications would, however, unnecessarily reduce the feasible domain of \eqref{eq:Modifed:MPC}. The proposed formulation arguably avoids that issue, and ensures feasibility via introducing the exploration radius as a decision variable in the NLP. Finally, a stabilizing feedback ought to be considered when forming the sensitivities \eqref{eq:Sens}, especially when the dynamics \eqref{eq:dynamics:RMPC} are unstable. This additional feedback is a classic tool to reduce the conservatism of the RMPC schemes. It is not presented here for the sake of brevity. 

Since a first-order approximation of the constraints is used when forming \eqref{eq:Modifed:MPC}, its solution may not ensure the feasibility of all exploration $\vect {\hat e}\in B\left(0,\eta\right)$. In the next section, we will address this problem with a posterior projection technique. We will show that this projection does not bias the policy gradient estimation.
\section{Ensuring feasibility} \label{sec:FeasibilityForce}
%Extra precautions are needed to ensure the feasibility of the exploration when the MPC scheme includes nonlinear constraints or dynamics. Indeed, 
Because we considered a first-order approximation of the constraints when forming the RMPC \eqref{eq:Modifed:MPC}, a posterior projection ought to be used to ensure the feasibility of the exploration. Using \eqref{eq:projection}, we apply the projection of $\vect{\hat{a}}$ on the feasible set as:
\begin{align}\label{eq:proj:a:hat}
 \vect {\hat a}_\perp = P\left(\vect s,\vect {\hat\pi}_{\vect\theta}+\vect {\hat e}\right).   
\end{align}
Using \eqref{eq:explo}, let us define the projection correction $\vect \epsilon$ as:
\begin{align}\label{eq:explo2}
\vect \epsilon\,:=\,\vect{\hat{a}_\perp}-\vect {\hat a}, 
\end{align}
and using \eqref{eq:expl:prof}, the feasible projected exploration $\vect {\hat e}_\perp$ can be written as follows:
\begin{align}\label{eq:explo3}
\vect {\hat e}_\perp :=  \vect{\hat{a}_\perp}-\vect {\hat\pi}_{\vect\theta} =\vect {\hat e}+\vect \epsilon.  
\end{align}
In the following we will show that the norm of $\vect \epsilon$ is in the order of $\eta^2$ for small enough $\bar\eta$. To this end, we make the following mild assumption for the constraints.
\begin{Assumption} \label{assum:H}
$\vect{H}_{\vect{\theta}}$ is a second order differentiable function and we have:
\begin{align}
\label{eq:BoundedSens}
   \forall i \,\,,\,\, \left\|\frac{\partial \vect{H}^i_{\vect{\theta}}}{\partial \vect u_0}\Big|_{\vect {\hat \pi}_{\vect\theta}}\right\|\neq 0. 
\end{align}
\end{Assumption}
Note that if the constraints satisfy Linear Independence Constraint Qualification (LICQ), then \eqref{eq:BoundedSens} is satisfied.
\begin{Lemma}\label{lemma:eps}
For the projection error $\vect \epsilon$ defined in \eqref{eq:explo2} and small enough $\bar\eta$, there exists a positive $\alpha$ such that:
\begin{align}\label{eq:error}
\|\vect \epsilon\| \leq \alpha \eta^2.
\end{align}
\end{Lemma}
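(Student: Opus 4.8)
The plan is to exploit the fact that the back-off enforced in \eqref{eq:Modifed:MPC} renders the perturbed input feasible to first order, so that the posterior projection only has to absorb the neglected curvature, which is quadratic in $\eta$. First I would evaluate the constraints at the perturbed single-shooting profile $\vect{\hat u}$, whose first input is $\vect{\hat a}=\vect{\hat\pi}_{\vect\theta}+\vect{\hat e}$ and whose remaining inputs coincide with the nominal RMPC solution $\vect u$. Since $\vect{\hat u}$ differs from $\vect u$ only in $\vect u_0$, Assumption \ref{assum:H} lets me apply Taylor's theorem with remainder in that single direction:
\begin{align}
\vect{H}^i_{\vect\theta}(\vect s,\vect{\hat u})
= \vect{H}^i_{\vect\theta}(\vect s,\vect u)
+ \frac{\partial \vect{H}^i_{\vect\theta}}{\partial \vect u_0}\bigg|_{\vect u}\vect{\hat e}
+ \tfrac12\,\vect{\hat e}^\top \frac{\partial^2 \vect{H}^i_{\vect\theta}}{\partial \vect u_0^2}\bigg|_{\xi}\vect{\hat e}.
\end{align}
The sum of the first two terms is nonpositive: Cauchy--Schwarz bounds the linear term by $\|\partial\vect H^i_{\vect\theta}/\partial\vect u_0\|\,\eta$, and Lemma \ref{Lem:Trivial}, enforced at the optimal radius $\nu=\eta$, guarantees $\vect H^i_{\vect\theta}(\vect s,\vect u)+\|\partial\vect H^i_{\vect\theta}/\partial\vect u_0\|\,\eta\le 0$. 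The Hessian block is continuous, hence bounded by some $M$ on a compact neighbourhood of the nominal solution, and $\|\vect{\hat e}\|\leq\eta$, so the remainder is at most $\tfrac12 M\eta^2$. This shows every component of $\vect H_{\vect\theta}(\vect s,\vect{\hat u})$ is at most $c\eta^2$ with $c=\tfrac12 M$.

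Next I would turn this $O(\eta^2)$ constraint violation into an $O(\eta^2)$ correction. Under Assumption \ref{assum:H} (in particular under LICQ) the inequality system $\vect H_{\vect\theta}(\vect s,\cdot)\leq 0$ is metrically regular near its boundary, so it enjoys a local Hoffman-type error bound: there is a constant $L$ such that $\mathrm{dist}(\vect u,\mathcal F)\leq L\,\|\max(\vect H_{\vect\theta}(\vect s,\vect u),0)\|$ for $\vect u$ near the boundary, where $\mathcal F$ denotes the feasible set of \eqref{eq:Generic:NLP} and the positive part is taken componentwise. Choosing $\bar\eta$ small keeps $\vect{\hat u}$ inside this neighbourhood, whence $\mathrm{dist}(\vect{\hat u},\mathcal F)=O(\eta^2)$.

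Finally I would connect this full-profile distance to the scalar $\vect\epsilon$. Because the projection \eqref{eq:projection} penalises only the deviation of the first input, its optimal value equals the distance of $\vect{\hat a}$ to the projection of $\mathcal F$ onto the $\vect u_0$-coordinate; taking the first block of the feasible profile closest to $\vect{\hat u}$ exhibits a feasible first input within $\mathrm{dist}(\vect{\hat u},\mathcal F)$ of $\vect{\hat a}=\vect{\hat u}_0$. Hence $\|\vect\epsilon\|=\|\vect{\hat a}_\perp-\vect{\hat a}\|\leq \mathrm{dist}(\vect{\hat u},\mathcal F)\leq \alpha\eta^2$ for a suitable $\alpha$, which is the claim.

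The main obstacle is the uniformity of the error-bound constant $L$ as $\eta\to 0$: the Hoffman estimate is local, so one must verify that $L$ can be chosen independently of the (small) perturbation. This follows from Assumption \ref{assum:H} and continuity of the constraint Jacobian, which keep the active gradients bounded away from degeneracy in a fixed neighbourhood of the nominal solution; making this precise, and checking that the relevant active set does not change character for small $\bar\eta$, is the delicate point of the argument.
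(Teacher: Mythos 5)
Your overall strategy starts from the same place as the paper's proof: the RMPC back-off makes the perturbed input feasible to first order, so the constraint residual at the perturbed profile is $O(\eta^2)$ (your Taylor step, combining the tightened constraint at the optimal radius $\nu=\eta$ with Cauchy--Schwarz and a bounded Hessian, is essentially the paper's inequalities \eqref{eq:H:Taylor}--\eqref{eq:IneqfromRMPC}). Your final reduction is also valid: any feasible profile of \eqref{eq:Generic:NLP} is admissible for the projection \eqref{eq:projection}, so $\|\vect\epsilon\|$ is bounded by the distance from the perturbed profile to the feasible set. Where you genuinely diverge is the middle step, converting an $O(\eta^2)$ constraint violation into an $O(\eta^2)$ distance via a local Hoffman/metric-regularity error bound.

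That step is the gap. Assumption \ref{assum:H} only asserts that each individual gradient $\partial\vect H^i_{\vect\theta}/\partial\vect u_0$ is nonzero at the policy; it is not LICQ and does not give metric regularity of the joint system $\vect H_{\vect\theta}\leq 0$ (two violated constraints with nearly opposite, nonzero gradients already defeat a componentwise argument), and even granting LICQ, the uniformity of the error-bound constant as $\eta\to 0$ --- which you correctly flag as the delicate point --- is precisely what the lemma needs and is left unproved. The paper avoids this machinery entirely: it backtracks along the exploration direction, testing the candidate $\hat{\vect\pi}_{\vect\theta}+(1-\alpha\eta)\,\vect{\hat e}$. The back-off gives each constraint a linear margin $\eta\,\|\partial\vect H^i_{\vect\theta}/\partial\vect u_0\|$; shrinking the step by the relative factor $\alpha\eta$ recovers slack $\alpha\eta^2\|\partial\vect H^i_{\vect\theta}/\partial\vect u_0\|$, which absorbs the quadratic remainder $c\eta^2$ once $\alpha\geq c\,\|\partial\vect H^i_{\vect\theta}/\partial\vect u_0\|^{-1}$. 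Because the recovered slack scales with the same gradient norm as the margin, a single $\alpha$ works for all constraints simultaneously under Assumption \ref{assum:H} alone, and the resulting feasible point lies within $\alpha\eta\|\vect{\hat e}\|\leq\alpha\eta^2$ of $\vect{\hat a}$, bounding the projection error directly. To repair your version, either strengthen the hypotheses to a uniform MFCQ/LICQ condition and prove the error-bound constant is independent of $\eta$ and of the state, or replace that step with this segment/backtracking argument, which is more elementary and matched to the assumptions actually available.
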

\begin{proof} Let us define $\mathcal{H}_{\vect \theta}$ as, 
\begin{align}
\mathcal{H}_{\vect \theta} (\vect s,\vect u_0):=\vect {H}_{\vect \theta} (\vect s,\tilde{\vect u}),
\end{align}
where $\tilde{\vect u} := \left\{\vect u_0, \vect u_1^\perp,\ldots,\vect u_{N-1}^\perp\right\}$. We define $\mathcal{H}^i_{\vect \theta}$ as the $i^{\mathrm{th}}$ element of vector $\mathcal{H}_{\vect \theta}$. Consider the exploration described by its unitary direction $\vect v$, i.e. $\|\vect v\|=1$, and magnitude $\zeta \leq \eta$, i.e. $\vect {e} = \zeta\vect v$. We observe that:
\begin{align} \label{eq:H:Taylor}
\mathcal H_{\vect \theta}^i\left(\vect s,\hat{\vect\pi}_{\vect\theta}+\vect {\hat e}\right) \leq \mathcal H_{\vect \theta}^i\left(\vect s,\hat{\vect\pi}_{\vect\theta}\right) + (\mathcal{H}^i_{\vect \theta})'\vect {\hat e}  + R\left(\vect {\hat e}\right),
\end{align}
where $(\mathcal{H}^i_{\vect \theta})':=\frac{\partial \mathcal H_{\vect \theta}^i}{\partial \vect u_0}\big|_{\hat{\vect\pi}_{\vect\theta}}$. The inequality \eqref{eq:H:Taylor} holds for all $\vect {\hat e}\in B\left(0,\eta\right)$ for some continuous function $R\left(\vect {\hat e}\right)$, and there is a constant $c$ such that:
\begin{align}
\label{eq:Rbound}
\left|R\left(\vect {\hat e}\right)\right|\leq c\|\vect {\hat e}\|^2.    
\end{align}
Additionally: 
\begin{align}
\label{eq:IneqfromRMPC}
\mathcal H_{\vect \theta}^i\left(\vect s,\hat{\vect\pi}_{\vect\theta}\right) \leq -\eta \left\|(\mathcal{H}^i_{\vect \theta})'\right\|,
\end{align}
because $\left\{\hat{\vect\pi}_{\vect\theta}, \vect u_1^\perp,\ldots,\vect u_{N-1}^\perp\right\}$ is feasible for the RMPC scheme \eqref{eq:Modifed:MPC}. 
%hold by construction for all $\vect {e}$ in $B(0,\eta)$. It follows in particular that
%\begin{align}
 %- \left\|\frac{\partial \mathcal H_{\vect \theta}^i}{\partial \vect u_0}\right\| + \frac{\partial \mathcal H_{\vect \theta}^i}{\partial \vect u_0}\vect v\leq 0
%\end{align}
%\arash{right, it holds from cauchy-schwarz.}\seb{I'm actually not sure we need that...}holds for all $\vect v$. By continuity, for all $\vect {e}$ infeasible, there exists a $ t\in[0,1[$ such that
%\begin{align}
%\label{eq:Feasible:t}
%\mathcal H_{\vect \theta}^i\left(\hat{\vect\pi}_{\vect\theta}+t\vect {e},\vect s\right) \leq \mathcal H_{\vect \theta}^i\left(\hat{\vect\pi}_{\vect\theta},\vect s\right) + \left.\frac{\partial \mathcal H_{\vect \theta}^i}{\partial \vect u_0}\right|_{\hat{\vect\pi}_{\vect\theta}}t\vect {e}  + R\left(t\vect {e}\right) \leq 0
%\end{align}
%\arash{Yes. e.g., $t=0$ holds }
%holds.
Consider any sequence $\eta_k>0$ converging uniformly to $0$, and a corresponding sequence $t_k= \max(1-\alpha\eta_k,0)$ for some positive constant $\alpha$. One can readily observe that $t_k \vect {\hat e}\in B\left(0,\eta\right)$.  Additionally, by construction, there exists an index $k_0$ such that for all $k\geq k_0$,  $t_k=1-\alpha\eta_k$ holds. Using $t_k \vect {e}$ as the exploration in the right side of \eqref{eq:H:Taylor}, for $k\geq k_0$ we have:
\begin{align}
& \mathcal H_{\vect \theta}^i\left(\vect s ,\hat{\vect\pi}_{\vect\theta}\right) + (\mathcal{H}^i_{\vect \theta})'\vect {\hat e} \left(1-\alpha\eta_k\right) + R\left(\left(1-\alpha\eta_k\right)\vect {\hat e}\right) \leq \nonumber\\& -\eta_k \left\|(\mathcal{H}^i_{\vect \theta})'\right\| + (\mathcal{H}^i_{\vect \theta})'\zeta\vect v \left(1-\alpha\eta_k\right) +c\left(1-\alpha\eta_k\right)^2\zeta^2  \leq \nonumber \\& 
-\eta_k \left\|(\mathcal{H}^i_{\vect \theta})'\right\|+\left\|(\mathcal{H}^i_{\vect \theta})'\right\|\eta_k\left(1-\alpha\eta_k\right)+c\left(1-\alpha\eta_k\right)^2\zeta^2\nonumber \\& 
=-\left\|(\mathcal{H}^i_{\vect \theta})'\right\| \alpha\eta_k^2  + c\left(1-\alpha\eta_k\right)^2\eta_k^2 \leq 0, 
\end{align}
where the first inequality uses \eqref{eq:Rbound}  and \eqref{eq:IneqfromRMPC}. The second inequality is obtained from selecting $\zeta = \eta_k$ and using the Cauchy–Schwarz inequality.  Using Assumption \ref{assum:H}, the last inequality holds for $c\left\|(\mathcal{H}^i_{\vect \theta})'\right\|^{-1}\leq \alpha$. Therefore, $t_k \vect {\hat e}$ is a feasible exploration for \eqref{eq:Generic:NLP} and has a larger (or equal) error than the projection error. Then we have:
\begin{align}
&\|\vect \epsilon\| = \|\vect {\hat a}_\perp - \vect {\hat a} \| = \|\vect {\hat e}_\perp - \vect {\hat e} \|  \leq  \|t_k \vect {\hat e} - \vect {\hat e} \|=\nonumber\\&\|(1-t_k) \vect {\hat e}\|=\|
\alpha \eta_k \vect {\hat e}\|\leq \alpha \eta^2.
\quad \quad \quad \qedhere 
\end{align}
\end{proof}
%\end{comment}
%\arash{In the following I write the proof with another order}
%\begin{proof}
% We observe that:
%\begin{align}
%\mathcal H_{\vect \theta}^i\left(\hat{\vect\pi}_{\vect\theta}+\vect {e},\vect s\right) \leq \mathcal H_{\vect \theta}^i\left(\hat{\vect\pi}_{\vect\theta},\vect s\right) + \left.\frac{\partial \mathcal H_{\vect \theta}^i}{\partial \vect u_0}\right|_{\hat{\vect\pi}_{\vect\theta}}\vect {e}  + R\left(\vect {e}\right)
%\end{align}
%holds for all $\vect {e}\in B\left(0,\eta\right)$ for some continuous function $R\left(\vect {e}\right)$, and there is a constant $c$ such that $\left|R\left(\vect {e}\right)\right|\leq c\|\vect {e}\|^2$ on $B\left(0,\eta\right)$. Consider the following function:
% \begin{align}
%f_\alpha\left(\eta_k\right) := -\left\|\frac{\partial \mathcal H_{\vect \theta}^i}{\partial \vect u_0}\right\| \alpha +  c\left(1-\alpha\eta_k\right)^2 
% \end{align}
% where $\eta_k>0$ is any sequence converging uniformly to $0$, constant $\alpha>0$ is selected such that $\alpha \geq c\left\|\frac{\partial \mathcal H_{\vect \theta}^i}{\partial \vect u}\right\|^{-1}$. One can observe that $f_\alpha$ is a non-positive function. Hence we will show 
%\end{proof}

The following theorem provides some useful properties on the statistics of $\vect {\hat{e}_\perp}$.
\begin{theorem}\label{theorem:hat:e}
The projected exploration $\vect {\hat{e}_\perp}$ defined in \textnormal{(\ref{eq:proj:a:hat}-\ref{eq:explo3})}, for the policy resulting from RMPC \eqref{eq:Modifed:MPC}, has the following properties:
\begin{subequations}\label{eq:hat:e}
\begin{align} 
&\lim_{\bar\eta\rightarrow 0}\mathbb{E}_{\vect {\hat{e}_\perp}} [\vect {\hat{e}_\perp}]=0, \label{eq:hat:e1} \\ &\lim_{\bar\eta\rightarrow 0}\mathbb{E}_{\vect {\hat{e}_\perp}}\left[\frac{1}{\eta^2}\vect {\hat{e}_\perp}\vect {\hat{e}_\perp}^\top\right]=  \frac{1}{3} I, \label{eq:hat:e2} \\&\lim_{\bar\eta\rightarrow 0} \mathbb{E}_{\vect {\hat{e}_\perp}} \left[\frac{1}{\eta^2}\vect {\hat{e}_\perp} \xi (\vect{\hat{e}_\perp})\right]=0, \label{eq:hat:e3}
\end{align}
\end{subequations}
where $\eta$ is the solution of $\nu$ in the RMPC \eqref{eq:Modifed:MPC} and $\xi$ is any scalar function satisfying $|\xi(.)|\leq r\|.\|^2$ for some positive $r$.
\end{theorem}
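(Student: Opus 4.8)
The plan is to exploit the decomposition $\vect {\hat e}_\perp = \vect {\hat e} + \vect \epsilon$ from \eqref{eq:explo3}, combining the known statistics of the uniform exploration $\vect {\hat e}$ on $B(0,\eta)$ with the $\mathcal{O}(\eta^2)$ bound $\|\vect \epsilon\| \leq \alpha\eta^2$ furnished by Lemma \ref{lemma:eps}. Throughout I would work at a fixed state $\vect s$, for which $\eta=\eta(\vect s)$ is a deterministic quantity satisfying $0 \leq \eta \leq \bar\eta$ by construction of the RMPC \eqref{eq:Modifed:MPC}; the expectation $\mathbb{E}_{\vect {\hat e}_\perp}$ is then the expectation over $\vect {\hat e}$ under the pushforward by the (deterministic) projection map, so the factor $1/\eta^2$ can be pulled outside. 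Since all bounds below are uniform in $\vect s$ (through $\eta\leq\bar\eta$), the argument passes unchanged to an outer expectation over the state. Each claim then reduces to showing that the correction terms involving $\vect \epsilon$ are $\mathcal{O}(\bar\eta)$ and hence vanish as $\bar\eta\to 0$.

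For the first property \eqref{eq:hat:e1}, I would write $\mathbb{E}[\vect {\hat e}_\perp] = \mathbb{E}[\vect {\hat e}] + \mathbb{E}[\vect \epsilon]$. The uniform exploration on the symmetric ball is centred, $\mathbb{E}[\vect {\hat e}] = 0$, while Jensen's inequality together with Lemma \ref{lemma:eps} gives $\|\mathbb{E}[\vect \epsilon]\| \leq \mathbb{E}[\|\vect \epsilon\|] \leq \alpha\eta^2 \leq \alpha\bar\eta^2 \to 0$, which establishes the claim.

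For the second property \eqref{eq:hat:e2}, expanding the outer product yields $\vect {\hat e}_\perp\vect {\hat e}_\perp^\top = \vect {\hat e}\vect {\hat e}^\top + \vect {\hat e}\vect \epsilon^\top + \vect \epsilon\vect {\hat e}^\top + \vect \epsilon\vect \epsilon^\top$. The leading term contributes $\frac{1}{\eta^2}\mathbb{E}[\vect {\hat e}\vect {\hat e}^\top] = \frac{1}{3}I$ exactly, using the stated second moment $\mathbb{E}[\vect {\hat e}\vect {\hat e}^\top] = \frac{1}{3}\eta^2 I$. The remaining terms I would bound after normalization: with $\|\vect {\hat e}\| \leq \eta$ and $\|\vect \epsilon\| \leq \alpha\eta^2$, the cross terms obey $\frac{1}{\eta^2}\|\mathbb{E}[\vect {\hat e}\vect \epsilon^\top]\| \leq \frac{1}{\eta^2}\cdot\eta\cdot\alpha\eta^2 = \alpha\eta \leq \alpha\bar\eta$, and the quadratic term obeys $\frac{1}{\eta^2}\|\mathbb{E}[\vect \epsilon\vect \epsilon^\top]\| \leq \frac{1}{\eta^2}\alpha^2\eta^4 = \alpha^2\bar\eta^2$, both tending to zero. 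The third property \eqref{eq:hat:e3} is analogous but uses a cubic bound: from $\|\vect {\hat e}_\perp\| \leq \eta + \alpha\eta^2$ and $|\xi(\cdot)| \leq r\|\cdot\|^2$ one obtains $\frac{1}{\eta^2}\|\vect {\hat e}_\perp\xi(\vect {\hat e}_\perp)\| \leq \frac{r}{\eta^2}\|\vect {\hat e}_\perp\|^3 \leq r\eta(1+\alpha\bar\eta)^3$, which vanishes as $\bar\eta\to 0$ after taking the expectation.

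The only genuinely delicate point is the covariance claim \eqref{eq:hat:e2}: the normalization by $\eta^2$ is exactly the order of the leading moment, so I must verify that the cross term $\vect {\hat e}\vect \epsilon^\top$ carries a surplus power of $\eta$ relative to $\eta^2$. This is precisely where Lemma \ref{lemma:eps} is indispensable — a mere first-order, $\mathcal{O}(\eta)$, bound on $\vect \epsilon$ would leave a nonvanishing contribution $\mathcal{O}(\eta)\cdot\eta/\eta^2 = \mathcal{O}(1)$ and thereby bias the limiting covariance, whereas the quadratic bound $\|\vect \epsilon\|\sim\mathcal{O}(\eta^2)$ reduces the cross term to $\mathcal{O}(\eta^3)/\eta^2 = \mathcal{O}(\eta)$ and restores the isotropic structure $\tfrac{1}{3}I$ in the limit. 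The remaining inequalities are routine applications of the triangle inequality, submultiplicativity of the matrix norm (equivalently Cauchy--Schwarz), and $\eta\leq\bar\eta$.
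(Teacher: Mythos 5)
Your proposal is correct and follows essentially the same route as the paper's proof: the decomposition $\vect{\hat e}_\perp = \vect{\hat e} + \vect\epsilon$, the exact computation $\mathbb{E}[\vect{\hat e}\vect{\hat e}^\top] = \tfrac{1}{3}\eta^2 I$ for the leading term, the $\mathcal{O}(\eta^2)$ bound from Lemma \ref{lemma:eps} to kill the cross and quadratic terms after dividing by $\eta^2$, and the cubic bound $\tfrac{r}{\eta^2}\|\vect{\hat e}_\perp\|^3 \leq r\eta(1+\alpha\eta)^3$ for the third property. Your closing remark on why a merely first-order bound on $\vect\epsilon$ would not suffice is exactly the point the paper's construction is designed to secure, and your explicit treatment of $\eta$ as deterministic given $\vect s$ is if anything slightly more careful than the paper's.
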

\begin{proof}
We have
$\lim_{\bar\eta\rightarrow 0}\eta=0$,
%\end{align}
because $\eta\in[0,\bar\eta]$.
 Using Lemma \ref{lemma:eps}, we have:
\begin{align}\label{eq:e:zero}
\lim_{\bar\eta\rightarrow 0}\left\|\mathbb{E}[\vect {\epsilon}]\right\|&\leq\lim_{\bar\eta\rightarrow 0}\mathbb{E}\left[\left\|\vect {\epsilon}\right\|\right]\leq\lim_{\bar\eta\rightarrow 0} \alpha\eta^2=0 \nonumber\\&\Rightarrow \lim_{\bar\eta\rightarrow 0}\mathbb{E}[\vect {\epsilon}]=0.
\end{align}
Taking the expectation from \eqref{eq:explo3} and using that the exploration $\vect {\hat e}$ is CI, we have:
\begin{align}
\lim_{\bar\eta\rightarrow 0}\mathbb{E}[\vect {\hat{e}_\perp}]=\lim_{\bar\eta\rightarrow 0}\Big(\mathbb{E}[\vect {\hat e}]+\mathbb{E}[\vect {\epsilon}]\Big)=0.
\end{align}
Using \eqref{eq:explo3}, the second moment can be written as follows:
\begin{align}
&\lim_{\bar\eta\rightarrow 0}\mathbb{E}\left[\frac{1}{\eta^2}\vect {\hat{e}_\perp}\vect {\hat{e}_\perp}^\top\right]= \lim_{\bar\eta\rightarrow 0}\Bigg(\mathbb{E}\left[\frac{1}{\eta^2}\vect {\hat e}\vect {\hat e}^\top\right]+\Bigg.\nonumber \\&\qquad\Bigg.\mathbb{E}\left[\frac{1}{\eta^2} \vect {\epsilon}\vect {\hat e}^\top\right]+\mathbb{E}\left[\frac{1}{\eta^2} \vect {\hat e}\vect {\epsilon}^\top\right]+\mathbb{E}\left[\frac{1}{\eta^2} \vect {\epsilon}\vect {\epsilon}^\top\right]\Bigg).
\end{align}
%\seb{Here you equivocate between $\bar\eta$ and $\eta$... they are not necessarily the same and the first term may be affected...}\arash{The first term is constant 1/3I}\seb{Please read my comment.}
For the first term we use that the exploration $\vect {\hat e}$ is CI with $p=\frac{1}{3}\eta^2 I$, i.e:
\begin{align}
    \mathbb{E}[\vect {\hat e}\vect {\hat e}^\top]=\frac{1}{3} \eta^2I
    \Rightarrow \mathbb{E}\left[\frac{1}{\eta^2}\vect {\hat e}\vect {\hat e}^\top\right]=\frac{1}{3} I.
\end{align}
Using \eqref{eq:error}, for the second term we have:
\begin{align}
&\lim_{\bar\eta\rightarrow 0}\left\|\mathbb{E}\left[\frac{1}{\eta^2} \vect {\epsilon}\vect {\hat e}^\top\right]\right\|\leq
\lim_{\bar\eta\rightarrow 0}\mathbb{E}\left[\frac{1}{\eta^2} \|\vect {\epsilon}\|\|\vect {\hat e}\|\right]\leq\nonumber \\
&\qquad\qquad \lim_{\bar\eta\rightarrow 0}{\alpha}\mathbb{E}[\|\vect {\hat e}\|]=0  \Rightarrow \lim_{\bar\eta\rightarrow 0} \mathbb{E}\left[\frac{1}{\eta^2} \vect {\epsilon}\vect {\hat e}^\top\right]=0.
\end{align}
The third term will vanish in the similar way and for the forth term we can write:
\begin{align}
&\lim_{\bar\eta\rightarrow 0}\left\|\mathbb{E}\left[\frac{1}{\eta^2} \vect {\epsilon}\vect {\epsilon}^\top\right]\right\|\leq
\lim_{\bar\eta\rightarrow 0}\mathbb{E}\left[\frac{1}{\eta^2} \|\vect {\epsilon}\|\|\vect {\epsilon}\|\right]\leq \\
&\qquad \lim_{\bar\eta\rightarrow 0}{\alpha}\eta^2
\leq \lim_{\bar\eta\rightarrow 0}{\alpha}\bar\eta^2=0 \Rightarrow \lim_{\bar\eta\rightarrow 0} \mathbb{E}\left[\frac{1}{\eta^2} \vect {\epsilon}\vect {\epsilon}^\top\right]=0.\nonumber
\end{align}
Then, they deliver \eqref{eq:hat:e2}. Finally for \eqref{eq:hat:e3}, we have:
\begin{align}
\|\vect {\hat{e}}_\perp\|=\|\vect {\hat{e}}+\vect \epsilon\|\leq\|\vect {\hat{e}}\|+\|\vect \epsilon\|\leq\eta+\alpha \eta^2.
\end{align}
Then:
\begin{align}
    &\lim_{\bar\eta\rightarrow 0} \left\|\mathbb{E} \left[\frac{1}{\eta^2}\vect {\hat{e}_\perp} \xi (\vect {\hat{e}_\perp})\right]\right\|\leq \lim_{\bar\eta\rightarrow 0} \mathbb{E} \left[\frac{1}{\eta^2}\|\vect {\hat{e}_\perp}\| |\xi (\vect {\hat{e}_\perp})|\right]
    \nonumber\\&\qquad\leq  \lim_{\bar\eta\rightarrow 0} \mathbb{E} \left[\frac{r}{\eta^2}\|\vect {\hat{e}_\perp}\|^3\right]\leq  \lim_{\bar\eta\rightarrow 0} r\eta(1+\alpha \eta)^3=0,
\end{align}
which delivers \eqref{eq:hat:e3}. 
\end{proof}
\section{Corrected Policy Gradient} \label{sec:CORRPOLGRAD}
In this section, we will show that the robust policy $\vect{\hat \pi}_{\vect{\theta}}$ delivers the true gradient as $\bar\eta\rightarrow0$. Indeed, the deterministic policy gradient method uses ``small" exploration and all results are valid in the sense of $\bar\eta\rightarrow 0$. We propose the compatible advantage function:
\begin{align}\label{eq:CorrectedA}
A^{\vect{w}}_{\vect{\hat \pi}_{\vect{\theta}}}\left(\vect s,\vect{\hat{a}}_\perp\right) =  \frac{\bar\eta^2}{\eta^2}\vect w^\top \nabla_{\vect{\theta}}\vect{\hat \pi}_{\vect{\theta}} \left(\vect{\hat{a}}_\perp - \vect{\hat \pi}_{\vect{\theta}}\right),
\end{align}
where the factor $\frac{\bar\eta^2}{\eta^2}$ is required to account for the varying exploration radius $\eta$ and $\vect w$ is obtained as follows:
\begin{align}\label{eq:A:LS:corrected}
\vect{w} =\mathrm{arg}\min_{\vect{w}}\, \frac{1}{2}\mathbb{E}_{\vect{\hat \pi}_{\vect{\theta}},\vect {\hat{e}_\perp}}\left[\frac{1}{\bar\eta^2} \left(Q_{\vect{\hat\pi}_{\vect{\theta}}} - \hat V_{\vect{\hat\pi}_{\vect{\theta}}} - A^{\vect{w}}_{\vect{\hat\pi}_{\vect{\theta}}}   \right)^2\right]
\end{align}
where $\mathbb{E}_{\vect{ \hat \pi}_{\vect{\theta}},\vect {\hat{e}_\perp}}=\mathbb{E}_{\vect{\hat \pi}_{\vect{\theta}}}[\mathbb{E}_{\vect {\hat{e}_\perp}}[.|\vect s]]$ and $\bar\eta^{-2}$ is introduced such that \eqref{eq:A:LS:corrected} remains well-posed for $\bar\eta\rightarrow 0$.
\begin{Assumption} \label{assump}
$Q_{\vect{\hat\pi}_{\vect{\theta}}}$ is analytic and at least twice differentiable for almost every feasible $\vect s$ and $\nabla^2_{\vect {a}} Q_{\vect{\hat \pi}_{\vect{\theta}}}$ is bounded.
\end{Assumption}
Assumption \ref{assump} is usually satisfied in practice, as $Q_{\vect{\hat\pi}_{\vect{\theta}}}$ tends to be at least piecewise smooth for the many problems based on continuous state-input spaces. This assumption can be relaxed, but it requires more technical developments.
\begin{theorem}  The RMPC-based policy gradient estimation using the compatible advantage function in \eqref{eq:CorrectedA} with $\vect w$ given by \eqref{eq:A:LS:corrected} asymptotically converges to exact gradient, i.e.:
\begin{align}\label{eq:GradError}
&\lim_{\bar\eta\rightarrow 0}\widehat{\nabla_{\vect{\theta}}\, J(\vect{\hat \pi}_{\vect{\theta}})} ={\nabla_{\vect{\theta}}\, J(\vect{ \hat \pi}_{\vect{\theta}})}.
\end{align}
\end{theorem}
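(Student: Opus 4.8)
The plan is to obtain $\vect w$ from the first-order stationarity (normal equation) of the least-squares problem \eqref{eq:A:LS:corrected}, substitute it into the gradient estimate \eqref{eq:dj:Approx} written for $\vect{\hat\pi}_{\vect\theta}$, and show that as $\bar\eta\rightarrow0$ the estimate collapses onto the exact gradient \eqref{eq:dj}. Write $\vect G := \nabla_{\vect\theta}\vect{\hat\pi}_{\vect\theta}$ and $\rho := \bar\eta^2/\eta^2$. Since $\nabla_{\vect w}A^{\vect w}_{\vect{\hat\pi}_{\vect\theta}} = \rho\,\vect G\,\vect{\hat e}_\perp$ (using $\vect{\hat a}_\perp-\vect{\hat\pi}_{\vect\theta}=\vect{\hat e}_\perp$) and the weight $\bar\eta^{-2}$ cancels one factor of $\rho$, the normal equation reads $\vect M\vect w=\vect b$ with $\vect M=\mathbb{E}[\frac{\rho}{\eta^2}\vect G\,\vect{\hat e}_\perp\vect{\hat e}_\perp^\top\vect G^\top]$ and $\vect b=\mathbb{E}[\frac{1}{\eta^2}(Q_{\vect{\hat\pi}_{\vect\theta}}-\hat V_{\vect{\hat\pi}_{\vect\theta}})\vect G\,\vect{\hat e}_\perp]$, the outer expectation over $\vect s$ and the inner over $\vect{\hat e}_\perp$. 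Because $\nabla_{\vect a}A^{\vect w}_{\vect{\hat\pi}_{\vect\theta}}|_{\vect{\hat\pi}_{\vect\theta}}=\rho\,\vect G^\top\vect w$, the estimate becomes $\widehat{\nabla_{\vect\theta}J}=\mathbb{E}_{\vect s}[\rho\,\vect G\vect G^\top]\vect w=\vect P\vect M^{-1}\vect b$, with $\vect P:=\mathbb{E}_{\vect s}[\rho\,\vect G\vect G^\top]$.

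Next, under Assumption \ref{assump} I Taylor-expand $Q_{\vect{\hat\pi}_{\vect\theta}}$ about $\vect a=\vect{\hat\pi}_{\vect\theta}$ along the projected exploration, $Q_{\vect{\hat\pi}_{\vect\theta}}(\vect s,\vect{\hat a}_\perp)=Q_{\vect{\hat\pi}_{\vect\theta}}(\vect s,\vect{\hat\pi}_{\vect\theta})+\vect g^\top\vect{\hat e}_\perp+\xi(\vect{\hat e}_\perp)$, where $\vect g:=\nabla_{\vect a}Q_{\vect{\hat\pi}_{\vect\theta}}|_{\vect{\hat\pi}_{\vect\theta}}$ and, by the bounded Hessian, the remainder satisfies $|\xi(\cdot)|\le r\|\cdot\|^2$, exactly the class covered by \eqref{eq:hat:e3}. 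Substituting into $\vect b$ splits it into three pieces: a state-only (baseline) term proportional to $\beta(\vect s):=Q_{\vect{\hat\pi}_{\vect\theta}}(\vect s,\vect{\hat\pi}_{\vect\theta})-\hat V_{\vect{\hat\pi}_{\vect\theta}}$ weighted by $\frac{1}{\eta^2}\mathbb{E}_{\vect{\hat e}_\perp}[\vect{\hat e}_\perp|\vect s]$; a first-order term $\mathbb{E}_{\vect s}[\vect G\,\mathbb{E}_{\vect{\hat e}_\perp}[\frac{1}{\eta^2}\vect{\hat e}_\perp\vect{\hat e}_\perp^\top|\vect s]\,\vect g]$; and a remainder term $\mathbb{E}[\frac{1}{\eta^2}\xi(\vect{\hat e}_\perp)\vect G\vect{\hat e}_\perp]$.

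I then pass to the limit using Theorem \ref{theorem:hat:e}. Property \eqref{eq:hat:e2} sends the first-order term to $\frac13\mathbb{E}_{\vect s}[\vect G\vect g]=\frac13\nabla_{\vect\theta}J(\vect{\hat\pi}_{\vect\theta})$ (recall $\nabla_{\vect a}A=\vect g$ since $V$ is $\vect a$-independent) and likewise gives $\vect M\to\frac13\vect P$; property \eqref{eq:hat:e3} kills the remainder. The decisive point is that the designed factor $\rho=\bar\eta^2/\eta^2$ occurs identically in $\vect P$ and in $\vect M$, so $\vect P\vect M^{-1}\to 3I$ with no need for $\eta=\bar\eta$; hence $\widehat{\nabla_{\vect\theta}J}\to 3\vect b\to\nabla_{\vect\theta}J$, provided the baseline term vanishes. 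The latter holds because the projected exploration is asymptotically centred \eqref{eq:hat:e1}; taking the baseline to match the value function (so $\beta\equiv0$) makes it vanish identically, recovering the familiar baseline-invariance of the deterministic policy gradient.

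The main obstacle is precisely this baseline/constant term. Unlike the other two contributions it carries the weight $\frac{1}{\eta^2}$, while $\frac{1}{\eta^2}\mathbb{E}_{\vect{\hat e}_\perp}[\vect{\hat e}_\perp|\vect s]=\frac{1}{\eta^2}\mathbb{E}[\vect\epsilon|\vect s]$ is, by Lemma \ref{lemma:eps}, only $O(1)$ rather than $o(1)$: the projection \eqref{eq:proj:a:hat} breaks exact centring at order $\eta^2$, exactly the order that the $\frac{1}{\eta^2}$ weight amplifies. Controlling this term, either through an exact value-function baseline or through a sharper (odd-symmetry) estimate of $\mathbb{E}[\vect\epsilon]$, is the crux. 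A secondary technical point is the exchange of $\lim_{\bar\eta\to0}$ with $\mathbb{E}_{\vect s}$ when $\eta$ and $\rho$ depend on $\vect s$, which I would justify by dominated convergence together with the boundedness from Assumption \ref{assump} and invertibility of the limiting Gram matrix $\mathbb{E}_{\vect s}[\vect G\vect G^\top]$.
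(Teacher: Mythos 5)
Your route is essentially the paper's: form the stationarity (normal) equation of \eqref{eq:A:LS:corrected}, Taylor-expand $Q_{\vect{\hat\pi}_{\vect\theta}}$ about $\vect a=\vect{\hat\pi}_{\vect\theta}$ into a baseline term, a linear term and a quadratic remainder, and send each piece to its limit using Theorem~\ref{theorem:hat:e}. The only structural difference is cosmetic: you solve the normal equation explicitly as $\vect w=\vect M^{-1}\vect b$ and then compute $\vect P\vect M^{-1}\vect b$, whereas the paper never inverts anything --- it keeps the stationarity condition in the form \eqref{eq:Q:Fitting:Bias}, passes to the limit to get $\mathbb{E}_{\vect s}[\nabla_{\vect\theta}\vect{\hat\pi}_{\vect\theta}(\nabla_{\vect a}A_{\vect{\hat\pi}_{\vect\theta}}-\nabla_{\vect a}A^{\vect w}_{\vect{\hat\pi}_{\vect\theta}})]=0$ in \eqref{eq:Q:Fitting:Solution}, and reads off \eqref{eq:GradError} directly. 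The paper's formulation buys you freedom from assuming the limiting Gram matrix $\mathbb{E}_{\vect s}[\nabla_{\vect\theta}\vect{\hat\pi}_{\vect\theta}\nabla_{\vect\theta}\vect{\hat\pi}_{\vect\theta}^\top]$ is invertible; your formulation makes explicit the pleasant cancellation of the state-dependent factor $\bar\eta^2/\eta^2$ between $\vect P$ and $\vect M$, which the paper leaves implicit.

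The ``main obstacle'' you identify is real, and it is precisely the point where the paper is loosest rather than where you are wrong. The paper's third term in \eqref{eq:Q:Fitting:Bias} is $\mathbb{E}[\eta^{-2}\nabla_{\vect\theta}\vect{\hat\pi}_{\vect\theta}\,\vect{\hat e}_\perp(V_{\vect{\hat\pi}_{\vect\theta}}-\hat V_{\vect{\hat\pi}_{\vect\theta}})]$, and the paper asserts it vanishes ``using Theorem~\ref{theorem:hat:e}''; but \eqref{eq:hat:e1} controls $\mathbb{E}[\vect{\hat e}_\perp]$ without the $\eta^{-2}$ weight, and Lemma~\ref{lemma:eps} only gives $\|\vect\epsilon\|\leq\alpha\eta^2$, so $\eta^{-2}\mathbb{E}[\vect{\hat e}_\perp\,|\,\vect s]=\eta^{-2}\mathbb{E}[\vect\epsilon\,|\,\vect s]$ is merely $O(1)$ --- exactly your observation. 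That term therefore vanishes only if the baseline is exact ($\hat V_{\vect{\hat\pi}_{\vect\theta}}=V_{\vect{\hat\pi}_{\vect\theta}}$, which the paper does not assume) or if one proves a sharper, e.g.\ symmetry-based, estimate $\mathbb{E}[\vect\epsilon\,|\,\vect s]=o(\eta^2)$. So you have reproduced the paper's argument and, in addition, correctly located its weakest step; to match the paper's stated conclusion you would need to either add the exact-baseline hypothesis or supply that sharper bound. Your secondary point about exchanging $\lim_{\bar\eta\to0}$ with $\mathbb{E}_{\vect s}$ via dominated convergence is likewise a gap the paper silently skips.
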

\begin{proof}
The solution of \eqref{eq:A:LS:corrected} is given by:
\begin{align}
\label{eq:w:sol}
\mathbb{E}_{\vect{ \pi}_{\vect{\theta}},\vect {\hat{e}_\perp}}\left[\frac{1}{\eta^2}\nabla_{\vect{\theta}}\vect{\hat\pi}_{\vect{\theta}} \vect {\hat{e}_\perp}\left(Q_{\vect{\hat\pi}_{\vect{\theta}}} - \hat V_{\vect{\hat\pi}_{\vect{\theta}}} - A^{\vect{w}}_{\vect{\hat\pi}_{\vect{\theta}}}   \right) \right] = 0.
\end{align}
Using Assumption~\ref{assump}, the Taylor expansions of $Q_{\vect{\hat\pi}_{\vect{\theta}}}$ and $A^{\vect{w}}_{\vect{ \hat\pi}_{\vect{\theta}}}$ are valid almost everywhere. They read as:
\begin{align}\label{eq:Q:expansion}
Q_{\vect{\hat\pi}_{\vect{\theta}}}\left(\vect s,\vect {\hat a}_\perp\right)
&= V_{\vect{\hat\pi}_{\vect{\theta}}}\left(\vect s\right) + \nabla_{\vect {a}} A_{\vect{\hat\pi}_{\vect{\theta}}}\left(\vect s,\vect{\hat\pi}_{\vect\theta}(\vect s)\right)^\top\vect {\hat{e}_\perp} + \vect\xi, \nonumber\\
A^{\vect{w}}_{\vect{\hat \pi}_{\vect{\theta}}}\left(\vect s,\vect{\hat{a}}_\perp\right)&= \nabla_{\vect {a}} A^{\vect{w}}_{\vect{\hat \pi}_{\vect{\theta}}}\left(\vect s,\vect{\hat\pi}_{\vect \theta}(\vect s)\right)^\top \vect {\hat{e}_\perp}, 
\end{align}
where $\vect\xi$ is the second-order remainder of the Taylor expansion of $Q_{\vect{\hat\pi}_{\vect{\theta}}}$ at $\vect {\hat{e}_\perp}=0$ and the identity $\nabla_{\vect {a}} Q_{\vect{\hat\pi}_{\vect{\theta}}} = \nabla_{\vect {a}} A_{\vect{\hat\pi}_{\vect{\theta}}}$ was used. Using Assumption~\ref{assump}, $\vect\xi$ is of order $\|\vect {\hat{e}}_\perp\|^2$ for almost every feasible $\vect s$. 
By substitution of \eqref{eq:Q:expansion} in \eqref{eq:w:sol}, we have:
\begin{align}
\label{eq:Q:Fitting:Bias}
&\mathbb{E}_{\vect s, \vect {\hat{e}_\perp}} \left[\frac{1}{\eta^2}\nabla_{\vect{\theta}}\vect{\hat\pi}_{\vect{\theta}}{\vect {\hat{e}_\perp}}{\vect {\hat{e}_\perp}^\top}\left(\nabla_{\vect{a}}A_{\vect{\hat\pi}_{\vect{\theta}}}-\nabla_{\vect{a}}A^{\vect{w}}_{\vect{\hat\pi}_{\vect{\theta}}}\right) \right]+\nonumber\\
&\mathbb{E}_{\vect s,\vect {\hat{e}_\perp}}\left[ \frac{1}{\eta^2}\nabla_{\vect{\theta}}\vect{\hat\pi}_{\vect{\theta}}{\vect {\hat{e}_\perp}}\vect\xi\right] +\nonumber\\ & \mathbb{E}_{\vect s,\vect {\hat{e}_\perp}}\left[ \frac{1}{\eta^2}\nabla_{\vect{\theta}}\vect{\hat\pi}_{\vect{\theta}}\vect {\hat{e}_\perp}\left(V_{\vect{\hat\pi}_{\vect{\theta}}} -\hat V_{\vect{\hat\pi}_{\vect{\theta}}}\right) \right] = 0.
\end{align}
Using Theorem \ref{theorem:hat:e}, the second and third terms will be vanish in the sense of $\bar\eta\rightarrow 0$ and the first term will be: 
\begin{align}
\label{eq:Q:Fitting:Solution}
\lim_{\bar\eta\rightarrow 0} \mathbb{E}_{\vect s}\left[\nabla_{\vect{\theta}}\vect{\hat\pi}_{\vect{\theta}}\left(\nabla_{\vect{a}}A_{\vect{\hat\pi}_{\vect{\theta}}}-\nabla_{\vect{a}}A^{\vect{w}}_{\vect{\hat\pi}_{\vect{\theta}}} \right) \right]=0, 
\end{align}
which delivers \eqref{eq:GradError}.\end{proof}
In addition, under mild conditions, the RMPC-based policy resulting from \eqref{eq:Modifed:MPC} converges to the main MPC-based policy resulting from \eqref{eq:MPC} as $\bar\eta\rightarrow 0$. For the sake of brevity, we do not formalise this statement here.
\section{Numerical Simulation}\label{sec:sim}
In this section, we propose two numerical examples in order to illustrate the theoretical developments. The first example directly compares the MPC-based policy and the RMPC-based policy and the optimal policy with a nonlinear constraint. We consider the deterministic scalar MDP $s^+=s+a$ with stage cost $L(s,a)=s^2+a^2$ , constraint $s^2+5a^2 \leq 1$ and discount factor $\gamma=0.9$. Then we use the following MPC scheme to extract the approximated policy:
\begin{subequations}
\begin{align}
\min_{\vect{x},\vect{u}} \quad &  x^2_{N}+ \sum_{k=0}^{N-1} \gamma^k(\theta x^2_{k}+u^2_{k}), \\
s.t.\quad & x_{k+1}=x_k+u_k\, , \,
x_k^2+5u_k^2 \leq 1 \, , \, x_0=s,
\end{align}
\end{subequations}
then $\pi_\theta(s)=u^\star_0(s)$ obtained from the first element of the input solution. We can build RMPC-scheme according \eqref{eq:Modifed:MPC} and extract the modified policy $\hat \pi_\theta(s)$. Fig.\ref{fig1} (top) illustrates the posterior projected error $\|\epsilon\|$ and the approximated feasible radius $\eta$ for $\bar\eta=0.05$. As it can be seen, e.g., a fixed radius exploration with $\eta=0.05$ may be infeasible at $s=\pm1$. Fig.\ref{fig1} (bottom) compares these policies with the MDP optimal policy. This simple example shows that the RMPC-based policy makes a distance with the feasible bound to guarantee the feasibility of the exploration in both directions. While classic-MPC is on the feasible set bound, a feasible exploration should only be in one direction.
\begin{figure}[ht]
\centering
\includegraphics[width=0.45\textwidth]{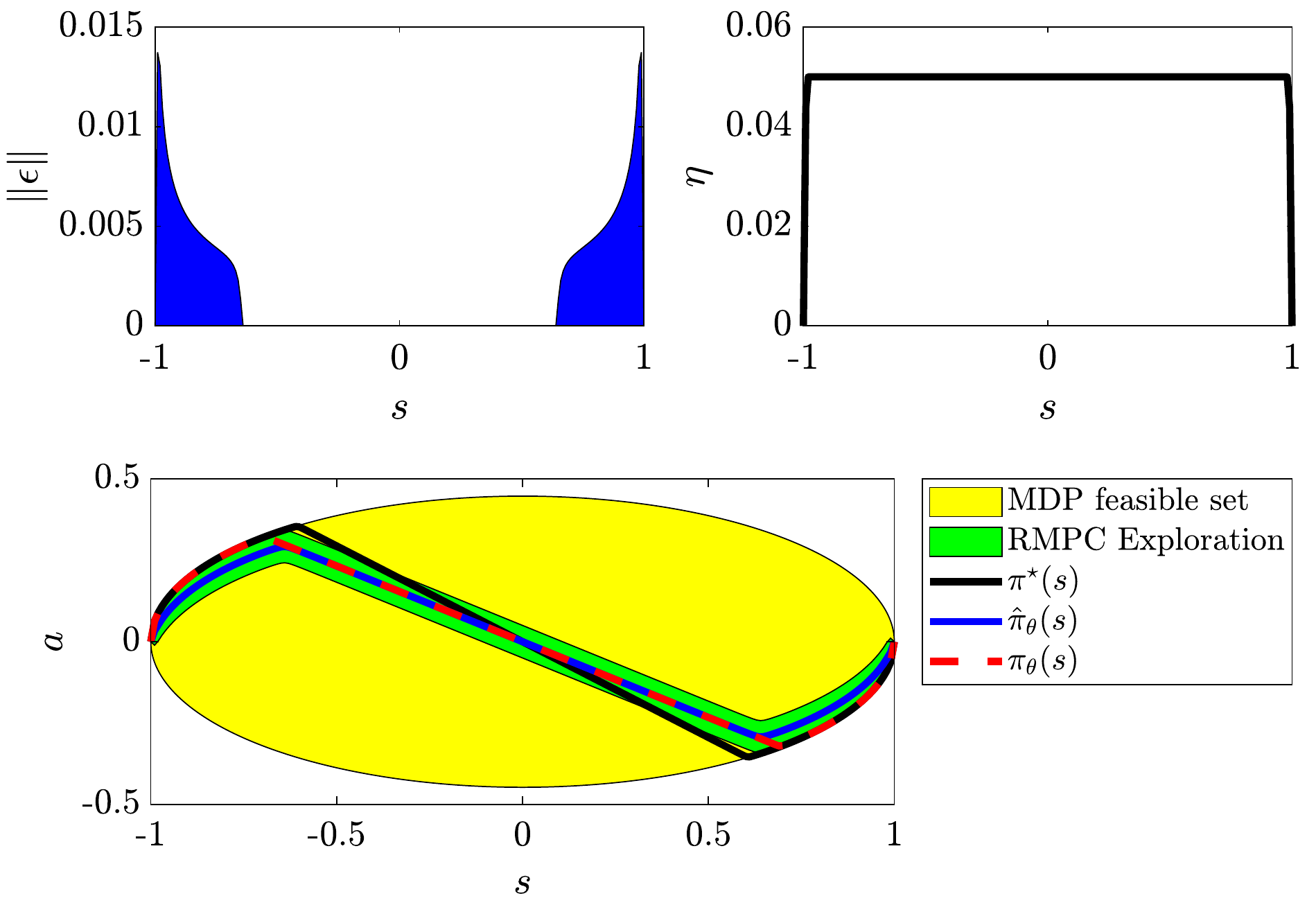}
\caption{Top-left: Blue region shows norm of the posterior projected error $\|\epsilon\|$. Top-right: The approximated feasible radius $\eta$. Bottom: The optimal policy and parametrized policies from MPC and RMPC scheme for $\theta=0.5$. }
\label{fig1}
\end{figure}
\par The second example compares the gradient of the RMPC-based and MPC-based policies with the true policy gradient. We consider linear scalar dynamics $
s^+=0.97s+0.1a+d
$
where $d\sim \mathcal{U}(-10^{-3},10^{3})$ is a scalar uniform noise. RL stage cost is
$
L(s,a)=20(s-0.5)^2+(a-2)^2
$
with $\gamma=0.9$. The policy is extracted from the following MPC scheme:
\begin{subequations}
\begin{align}
\min_u &\quad \sum_{k=0}^{50} \gamma^k (10(x_k-{1}/{3})^2+(u_k-u_{\mathrm{ref}}(\theta))^2), \\
\mathrm{s.t.} &\quad x_{k+1}=0.97x_{k}+0.1u_{k},\quad x_0=s, \\
&\quad u_k \leq \theta, \label{eq:cons:uref}
\end{align}
\end{subequations}
%\begin{subequations}
%\begin{align}
%\min_u &\quad \sum_{k=0}^{50} \gamma^k \left(10\left(x_k-\frac{1}{3}\right)^2+(u_k-u_{\mathrm{ref}}(\theta))^2\right) \\
%\mathrm{s.t.} &\quad x_{k+1}=0.97x_{k}+0.1u_{k}\quad ,\quad x_0=s \\
%&\quad u_k \leq \theta \label{eq:cons:uref}
%\end{align}
%\end{subequations}
where $u_{\mathrm{ref}}(\theta)=0.2-\theta$. The initial RL parameter $\theta = 0.1$ is selected. The MPC policy can be adjusted by increasing the input bound in \eqref{eq:cons:uref} and raising the input reference $u_{\mathrm{ref}}$. However, raising the input bound in \eqref{eq:cons:uref} by increasing $\theta$ results in decreasing the input reference $u_{\mathrm{ref}}$, such that these terms are in contradiction to find the optimal policy. Fig. \ref{fig3} shows the policy gradient over the RL iterations. The red (dashed) curve is the outcome of learning from the classic MPC, while the blue (solid) curve is the one from the RMPC. As it can be seen, the RMPC gradient $\widehat{\nabla_{\vect{\theta}}\,J(\vect{\hat \pi}_{\vect{\theta}})}$ delivers a very close gradient to the true gradient  $\nabla_{\vect{\theta}}\,J(\vect{\pi}_{\vect{\theta}})$. However, the MPC policy gradient $\widehat{\nabla_{\vect{\theta}}\, J(\vect{ \pi}_{\vect{\theta}})}$ has an obvious bias in both cases. Note that the closed-loop performance loss from this bias issue is not necessarily large for this example.
\begin{figure}[ht]
\centering
\includegraphics[width=0.4\textwidth]{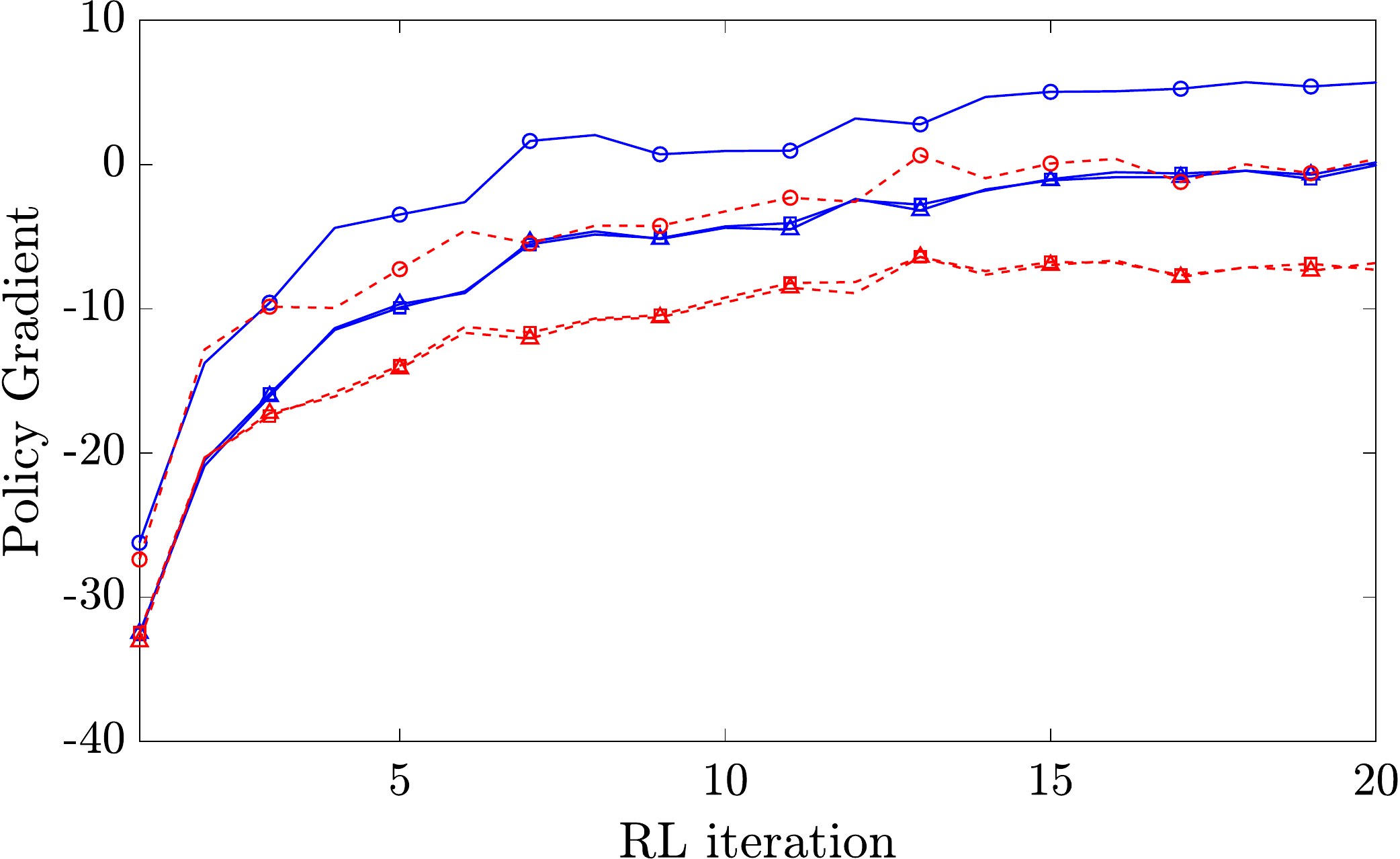}
\caption{ The policy gradients over the RL iterations. The outcome of learning using policy gradients from MPC (red-dashed) and RMPC scheme (blue-solid). ($\circ$ : $\widehat{\nabla_{\vect{\theta}}\,J(\vect{\pi}_{\vect{\theta}})}$. $\square$ :  $\nabla_{\vect{\theta}}\, J(\vect{ \pi}_{\vect{\theta}})$. $\vartriangle$ :  $\widehat{\nabla_{\vect{\theta}}\, J(\vect{\hat \pi}_{\vect{\theta}})}$.}
\label{fig3}
\end{figure}
\par A more complex example demonstrating the theory on a nonlinear example would be useful. For the sake of brevity, such an example will be considered in the future.
\section{CONCLUSION}\label{sec:conc}
This paper presented the AC approach using a linear compatible advantage function approximation for the MPC-based deterministic policies. When the policy is restricted by hard constraints, the exploration may be non-CI and delivers a bias in the policy gradient. We proposed RMPC using constraint tightening to provide an approximated feasible and CI exploration. A posterior projection is used to ensure feasibility and formally we showed that the RMPC-based policy gradient converges to the true policy gradient for a small enough radius of exploration. 
\bibliographystyle{IEEEtran}
\bibliography{Bias}
\end{document}